\newcommand{\xor}[0]{\oplus}
\newcommand{\size}[1]{|#1|}
\begin{document}
\title{A Number-Theoretic Error-Correcting Code}

\author{
  		Eric Brier\inst{1}, Jean-S\'{e}bastien Coron\inst{2}, R\'{e}mi G\'{e}raud\inst{1, 3},
		Diana Maimu\c{t}\inst{3}, and David Naccache\inst{2,3}
}

	\institute{
		Ingenico\\
		28-32 boulevard de Grenelle, {\sc f}-75015, Paris, France\\
		\url{{surname.name}@ingenico.com}
		\and
		Université du Luxembourg\\
		6 rue Richard Coudenhove-Kalergi, 1359 Luxembourg, Luxembourg\\
		\url{{surname.name}@uni.lu}
		\and
		\'Ecole normale sup\'erieure\\
		D\'epartement d'Informatique\\
		45 rue d'Ulm, {\sc f}-75230, Paris {\sc cedex 05}, France\\
		\url{{surname.name}@ens.fr}
	}

\maketitle

\begin{abstract}
In this paper we describe a new error-correcting code (ECC) inspired by the Naccache-Stern cryptosystem. 
While by far less efficient than Turbo codes, the proposed ECC happens to be more efficient than some established ECCs for certain sets of parameters.

The new ECC adds an appendix to the message. The appendix is the modular product of small primes representing the message bits. The receiver
recomputes the product and detects transmission errors using modular division and lattice reduction.
\end{abstract}

\section{Introduction}

Error-correcting codes (ECCs) are essential to ensure reliable communication. ECCs work by adding redundancy which enables detecting and correcting mistakes in received data.
This extra information is, of course, costly and it is important to keep it to a minimum: there is a trade-off between how much data is added for error correction purposes (bandwidth), and the number of errors that can be corrected (correction capacity).

Shannon showed \cite{Shannon48} in 1948 that it is in theory possible to encode messages with a minimal number of extra bits\footnote{Shannon's theorem states that the best achievable expansion rate is $1 - H_2(p_b)$, where $H_2$ is binary entropy and $p_b$ is the acceptable error rate.}.
Two years later, Hamming \cite{Hamming50} proposed a construction inspired by parity codes, which provided both error detection and 
error correction. Subsequent research saw the emergence of more efficient codes, such as Reed-Muller \cite{Muller54,Reed54}
and Reed-Solomon \cite{ReedS60}. The latest were generalized by Goppa \cite{Goppa81}. These codes are known as algebraic-geometric codes.

Convolutional codes were first presented in 1955 \cite{elias55}, while recursive systematic convolutional codes \cite{turbo93} were introduced in 1991. Turbo codes \cite{turbo93} were indeed revolutionary, given their closeness to the channel capacity (``near Shannon limit'').

\paragraph{Results:} This paper presents a new error-correcting code, as well as a form of message size improvement based on the hybrid use of two ECCs one of which is inspired by the Naccache-Stern (NS) cryptosystem \cite{NaccacheS97,CMNS08a}.
For some codes and parameter choices, the resulting hybrid codes outperform the two underlying ECCs.

The proposed ECC is unusual because it is based on number theory rather than on binary operations.

\section{Preliminaries}

\subsection{Notations}

Let $\mathfrak P = \left\{ p_1 = 2, \dotsc \right\}$ be the ordered set of prime numbers. 
Let $\gamma \geq 2$ be an encoding base. For any $m \in \mathbb N$ (the ``message''), let $\{m_i\}$ be the digits of $m$ in base $\gamma$ \emph{i.e.}:
\begin{equation*}
m = \sum_{i=0}^{k-1} \gamma^i m_i \qquad m_i \in [0, \gamma - 1], \quad k = \lceil \log_\gamma m \rceil 
\end{equation*}
We denote by $h(x)$ the Hamming weight of $x$, \emph{i.e.} the sum of $x$'s digits in base~2, and, by $\size{y}$ the bit-length of $y$.

\subsection{Error-Correcting Codes}

Let $\mathcal M = \{0,1\}^k$ be the set of messages, $\mathcal C = \{0,1\}^{n}$ the set of encoded messages.
Let $\mathcal P$ be a parameter set.

\begin{definition}[Error-Correcting Code]
	An \emph{error-correcting code} is a couple of algorithms:
	\begin{itemize}
		\item An algorithm $\mu$, taking as input some message $m\in \mathcal M$, as
		well as some public parameters $\mathsf{params} \in \mathcal P$, and outputting $c \in \mathcal C$.
		\item An algorithm $\mu^{-1}$, taking as input $\tilde c \in \mathcal C$ as well as parameters $\mathsf{params} \in \mathcal P$, and outputting $m \in \mathcal M \cup \{\bot\}$.
		
		The $\bot$ symbol indicates that decoding failed.
	\end{itemize}
\end{definition}

\begin{definition}[Correction Capacity]
	Let $(\mu, \mu^{-1}, \mathcal M, \mathcal C, \mathcal P)$ be an error-correcting code.
	There exists an integer $t \geq 0$ and some parameters $\mathsf{params} \in \mathcal P$ such that, for all $e \in \mathcal \{0,1\}^{n}$ such that $h(e) \leq t$, 
	\begin{equation*}
	\mu^{-1} \left( \mu \left( m , \mathsf{params} \right) \oplus e, \mathsf{params} \right) = m,
	\qquad 
	\forall m \in \mathcal M
	\end{equation*}
and for all $e$ such that $h(e) > t$,
	\begin{equation*}
	\mu^{-1} \left( \mu \left( m , \mathsf{params} \right) \oplus e, \mathsf{params} \right) \neq m,
	\qquad 
	\forall m \in \mathcal M.
	\end{equation*}
	$t$ is called the \emph{correction capacity} of $(\mu, \mu^{-1}, \mathcal M, \mathcal C, \mathcal P)$.
\end{definition}

\begin{definition}
\label{nktcode}
A code of message length $k$, of codeword length $n$ and with a correction capacity $t$ is called an $(n,k,t)$-code. The ratio $\rho = \frac{n}{k}$ is called the code's \emph{expansion rate}.
\end{definition}

\section{A New Error-Correcting Code}
\label{sec:NewEcc}

Consider in this section an existing $(n, k, t)$-code $C = (\mu, \mu^{-1}, \mathcal M, \mathcal C, \mathcal P)$.
For instance $C$ can be a Reed-Muller code.
We describe how the new  $(n',k,t)$-code $C' = (\nu, \nu^{-1}, \mathcal M, \mathcal C', \mathcal P')$ is constructed.
 
\paragraph{Parameter Generation:}
To correct $t$ errors in a $k$-bit message, we generate a prime $p$ such that:
\begin{equation}
\label{eq:equp}
2 \cdot p_{k}^{2t} < p < 4 \cdot p_{k}^{2t}
\end{equation}
As we will later see, the size of $p$ is obtained by bounding the worst case in which all errors affect the end of the message. $p$ is a part of $\mathcal P'$.

\paragraph{Encoding:}
Assume we wish to transmit a $k$-bit message $m$ over a noisy channel.
Let $\gamma = 2$ so that $m_i$ denote the $i$-th bit of $m$, and define: 
\begin{equation}
\label{defcm}
c(m): = \prod_{i=1}^k p_i^{m_i} \bmod{p}
\end{equation}

The integer generated by \Cref{defcm} is encoded using $C$ to yield $\mu(c(m))$.
Finally, the encoded message $\nu(m)$ transmitted over the
noisy channel is defined as:
\begin{equation}
\mu(m):=m \| \mu(c(m))
\end{equation}

Note that, if we were to use $C$ directly, we would have encoded $m$ (and not $c$).
The value $c$ is, in most practical situations, much shorter than $m$. As is explained in \Cref{sec:perfNS}, $c$ is smaller than $m$ (except the cases in which $m$ is very small and which are not interesting in practice) and thereby requires fewer extra bits for correction.
For appropriate parameter choices, this provides a more efficient encoding, as compared to $C$.

\paragraph{Decoding:}
Let $\alpha$ be the received\footnote{\emph{i.e.} encoded and potentially corrupted} message.
Assume that at most $t$ errors occurred during transmission:
$$ \alpha =  \nu(m) \xor e  =  m' \| (\mu(c(m)) \xor e') $$
where the error vector $e$ is such that $h(e)=h(m' \xor m)+h(e') \leq t$. 

Since $c(m)$ is encoded with a $t$-error-capacity code, we can recover the correct value of $c(m)$ from $\mu(c(m))
\xor e'$ and compute the quantity:
\begin{equation}
s=\frac{c(m')}{c(m)} \bmod{p}
\end{equation}
Using \Cref{defcm} $s$ can be written as:
\begin{equation}
s = \frac{a}{b} \bmod{p}, \quad
\begin{cases}
a & = \prod\limits_{(m'_i=1) \wedge (m_i=0)} p_i \\
b & = \prod\limits_{(m'_i=0) \wedge (m_i=1)} p_i
\end{cases}
\end{equation}

Note that since $h(m' \xor m) \leq t$, we have that $a$ and $b$
are strictly smaller than $(p_k)^t$. \Cref{theo} from
\cite{cryptorational} shows that given $t$ the receiver can recover $a$ and
$b$ efficiently using a variant of Gauss' algorithm
\cite{vallee}.

\begin{theorem}
\label{theo} Let $a,b \in {\mathbb Z}$ such that $-A \leq a \leq
A$ and $0<b \leq B$. Let $p$ be some prime integer such that
$2AB<p$. Let $s=a \cdot b^{-1} \mod p$. Then given $A$, $B$, $s$
and $p$, $a$ and $b$ can be recovered in polynomial time.
\end{theorem}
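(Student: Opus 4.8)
Proof proposal for Theorem \ref{theo} — this is the classical rational reconstruction / continued-fraction statement. Let me sketch the standard argument via lattice reduction in dimension 2 (equivalently, the extended Euclidean algorithm).

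The plan is to consider the two-dimensional lattice $L = \{(u,v) \in \mathbb Z^2 : u \equiv s v \pmod p\}$, generated by the rows of the matrix $\begin{pmatrix} p & 0 \\ s & 1 \end{pmatrix}$, which has determinant $p$. By construction the target vector $(a,b)$ lies in $L$, since $a \equiv s b \pmod p$. First I would observe that $(a,b)$ is a \emph{short} vector of $L$: its Euclidean norm is at most $\sqrt{A^2 + B^2}$, whereas the condition $2AB < p = \det L$ forces any two linearly independent lattice vectors to have norm-product at least something on the order of $p$, so a vector as short as $(a,b)$ is essentially forced to be (a multiple of) the shortest vector. Concretely, if $(a',b')$ is another lattice vector with $|a'| \le A$ and $0 < |b'| \le B$ and $(a',b')$ is not parallel to $(a,b)$, then $ab' - a'b$ is a nonzero integer, hence $|ab' - a'b| \ge 1$; but also $ab' - a'b \equiv s b b' - s b' b \equiv 0 \pmod p$, so $|ab' - a'b| \ge p$. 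On the other hand $|ab' - a'b| \le |a||b'| + |a'||b| \le 2AB < p$, a contradiction. Hence $(a,b)$ is, up to sign, the unique lattice vector in the box $[-A,A]\times(0,B]$, so recovering \emph{any} such vector recovers $(a,b)$.

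The next step is the algorithmic one: to actually find a shortest (or sufficiently short) nonzero vector of a rank-2 lattice in polynomial time. Here I would invoke Gauss' lattice reduction algorithm (the two-dimensional case of LLL), which on input the basis $(p,0), (s,1)$ outputs a reduced basis, the first vector of which is a shortest nonzero vector of $L$, in time polynomial in $\log p$. The cited reference \cite{vallee} gives precisely the complexity analysis of this variant of Gauss' algorithm. Having produced the shortest vector $(a_0, b_0)$, the uniqueness argument above shows that either $(a_0,b_0)$ or $(-a_0,-b_0)$ equals $(a,b)$ — one picks the representative with $b_0 > 0$ (and if $b_0 = 0$ that signals no valid $(a,b)$ exists). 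Finally one verifies $|a_0| \le A$, $0 < |b_0| \le B$ and $a_0 \equiv s b_0 \pmod p$ to confirm the output.

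The main obstacle — and the only place real care is needed — is the uniqueness/shortness argument: establishing that the size bound $2AB < p$ genuinely isolates $(a,b)$ among all candidate lattice points, and that the shortest-vector output of Gauss reduction must therefore coincide with it rather than with some other short-but-out-of-box vector. The determinant computation $|ab'-a'b| \ge p$ versus $\le 2AB$ is the crux and is exactly what the hypothesis was engineered to make work. Everything else — the polynomial running time of Gauss reduction, the final consistency checks — is routine. I would present the lattice/determinant argument in full and merely cite \cite{vallee} for the complexity of the reduction step.
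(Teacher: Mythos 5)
First, a point of reference: the paper does not prove this theorem at all --- it imports it verbatim from \cite{cryptorational} and merely points to Gauss' algorithm \cite{vallee} for the algorithmic part. So your proposal is being compared to the standard rational-reconstruction argument rather than to anything in the text, and in outline it is the right argument: the lattice $L=\{(u,v): u\equiv sv \pmod p\}$ of determinant $p$, the determinant/box computation $|ab'-a'b|\le 2AB<p$ forcing collinearity, and two-dimensional reduction for the algorithm.

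There is, however, a genuine gap at exactly the joint you flagged but did not close. Your uniqueness argument shows that every lattice vector lying \emph{in the box} $[-A,A]\times(0,B]$ is parallel to $(a,b)$. But the vector returned by Gauss reduction is the shortest vector for the \emph{Euclidean} norm, and nothing guarantees it lies in that box. Trying to rule out a non-parallel shortest vector $v_1$ directly gives $p\le|\det(v_1,(a,b))|\le\|v_1\|\cdot\|(a,b)\|\le A^2+B^2$, and since $A^2+B^2\ge 2AB$ with equality only when $A=B$, the hypothesis $2AB<p$ does \emph{not} yield a contradiction for unbalanced bounds: when $A$ and $B$ are of very different sizes the Euclidean-shortest vector of $L$ can genuinely fail to be proportional to $(a,b)$. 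The standard fix is to rescale so that the box becomes a square: reduce the lattice generated by $(pB,0)$ and $(sB,A)$ (equivalently, run Gauss reduction with respect to the quadratic form $(u/A)^2+(v/B)^2$); in the rescaled lattice the target has norm at most $\sqrt{2}\,AB$ while the determinant is $pAB>2(AB)^2$, and the collinearity argument then does go through for the reduced basis vector. (In the paper's application $A=B=(p_k)^t-1$, so your unweighted version happens to suffice there, but the theorem is stated for general $A,B$.) A second, smaller point: collinearity only determines $(a,b)$ up to an integer scalar, so the recovered primitive vector equals $(a,b)$ only under the implicit normalization $\gcd(a,b)=1$ (which holds in the paper's use, where $a$ and $b$ are products of disjoint sets of primes); without it, $s$ simply does not determine the pair, and the theorem should be read accordingly.
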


As $0 \leq a \leq A$ and $0 <b \leq B$ where $A=B=(p_k)^t-1$
and  $2AB<p$ from \Cref{eq:equp}, we can recover $a$ and $b$ from
$t$ in polynomial time. Then, by testing the divisibility of $a$
and $b$ with respect to the small primes $p_i$, the receiver can recover
$m' \xor m$ and eventually $m$.

A numerical example is given in \Cref{sec:ToyEx}.

\paragraph{Bootstrapping:}

Note that instead of using an existing code as a
sub-contractor for protecting $c(m)$, the sender may also recursively
apply the new scheme described above. To do so consider $c(m)$ as a message, and protect $\overline{c} = c(c(\cdots c(c(m)))$, which is a rather small value, against
accidental alteration by replicating it $2t+1$ times. The receiver will use a majority vote to detect the errors in $\overline{c}$. 

\subsection{Performance of the New Error-Correcting Code for $\gamma=2$}
\label{sec:perfNS}

\begin{lemma}
	\label{lem:1}
	The bit-size of $c(m)$ is:
	\begin{equation}
	\label{sizep}
	\log_2 p \simeq {2 \cdot t}\log_2(k \ln k).
	\end{equation}
\end{lemma}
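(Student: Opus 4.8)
The plan is to estimate $\log_2 p$ directly from the defining inequality $2p_k^{2t} < p < 4p_k^{2t}$, which pins down $\log_2 p$ to within an additive constant as $2t\log_2 p_k + O(1)$. Hence the whole problem reduces to estimating $\log_2 p_k$, i.e.\ the bit-size of the $k$-th prime. The key input is the Prime Number Theorem in the form $p_k \sim k\ln k$ (more precisely $p_k = k(\ln k + \ln\ln k - 1 + o(1))$ by Rosser--Schoenfeld-type bounds), so that $\log_2 p_k = \log_2(k\ln k) + o(\log_2 k)$. Substituting gives $\log_2 p = 2t\log_2(k\ln k) + o(t\log_2 k)$, which is exactly the claimed asymptotic equivalence once one reads ``$\simeq$'' as equality of leading-order terms.

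The steps, in order, would be: first, take logarithms base $2$ of \Cref{eq:equp} to get $1 + 2t\log_2 p_k < \log_2 p < 2 + 2t\log_2 p_k$, so $\log_2 p = 2t\log_2 p_k + O(1)$. Second, invoke the asymptotic $p_k \sim k\ln k$; a clean way to state what is needed is $\ln p_k = \ln k + \ln\ln k + o(1)$, equivalently $p_k = (k\ln k)(1 + o(1))$. Third, convert to base $2$: $\log_2 p_k = \log_2(k\ln k) + \log_2(1+o(1)) = \log_2(k\ln k) + o(1)$. Fourth, combine: $\log_2 p = 2t\big(\log_2(k\ln k) + o(1)\big) + O(1) = 2t\log_2(k\ln k) + o(t) + O(1)$, and since the main term $2t\log_2(k\ln k)$ dominates (for $k$ large), this justifies \Cref{sizep}.

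I expect the only real subtlety to be a bookkeeping one rather than a mathematical obstacle: making precise what ``$\simeq$'' means and checking that the error terms are genuinely lower-order than $2t\log_2(k\ln k)$. The $O(1)$ coming from the factor-$2$ window on $p$ is negligible compared to the $t\log_2 k$ main term, and the $o(1)$ from the prime-counting asymptotic is likewise absorbed; so no delicate cancellation arises. If one wanted fully explicit (non-asymptotic) bounds instead, one would replace $p_k\sim k\ln k$ by the effective inequalities $k\ln k < p_k < k(\ln k + \ln\ln k)$ valid for $k \ge 6$, and propagate them through the same four steps to get explicit upper and lower bounds on $\log_2 p$; but for the stated lemma the asymptotic form suffices.
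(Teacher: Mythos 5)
Your proposal is correct and follows exactly the paper's (very terse) argument: take logarithms of the defining inequality $2p_k^{2t} < p < 4p_k^{2t}$ and substitute the Prime Number Theorem estimate $p_k \simeq k\ln k$. You simply spell out the bookkeeping of the $O(1)$ and $o(1)$ error terms that the paper leaves implicit.
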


\begin{proof}
	From \Cref{eq:equp} and the Prime Number Theorem\footnote{$p_k \simeq k\ln k$.}.
	\qed
\end{proof}

The total output length of the new error-correcting code is therefore $\log_2 p$, plus the length $k$ of the message $m$.

$C'$ outperforms the initial error correcting code $C$ if, for equal error capacity $t$ and message length $k$, it outputs
a shorter encoding, which happens if $n' < n$, keeping in mind that both $n$ and $n'$ depend on $k$.

\begin{corollary}
	\label{corollary}
	Assume that there exists a constant $\delta>1$ such that, for $k$ large enough, $n(k) \geq \delta k$.
	Then for $k$ large enough, $n'(k) \leq n(k)$.
\end{corollary}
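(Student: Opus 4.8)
The plan is to bound $n'(k)$ from above and $n(k)$ from below, and then observe that the gap, which is linear in $k$ on the lower side, eventually swallows the appendix on the upper side. First I would make the codeword length of $C'$ explicit: since $\nu(m)=m\,\|\,\mu(c(m))$, we have $n'(k)=k+\ell(k)$, where $\ell(k)$ is the length of the (protected) appendix. By \Cref{lem:1}, and more precisely directly from \Cref{eq:equp} together with the bound $p_k=O(k\ln k)$ furnished by the Prime Number Theorem, $\ell(k)\le\log_2\!\big(4p_k^{2t}\big)=2+2t\log_2 p_k\le C\,t\log_2 k$ for a suitable absolute constant $C$ and all $k$ large enough. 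In particular $\ell(k)=O(\log k)$ once $t$ is treated as a fixed constant (the natural reading of ``equal error capacity $t$''), and more generally $\ell(k)=o(k)$ whenever $t=o(k/\log k)$.

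Next I would invoke the hypothesis. For $k$ large enough we are given $n(k)\ge\delta k$ with $\delta>1$, hence $n(k)-k\ge(\delta-1)k$ with $\delta-1>0$. Since the right-hand side grows linearly in $k$ while $\ell(k)$ grows only logarithmically, there is a threshold $k_0=k_0(\delta,t)$ such that $(\delta-1)k\ge C\,t\log_2 k\ge\ell(k)$ for all $k\ge k_0$. For such $k$ this rearranges to $n(k)\ge k+\ell(k)=n'(k)$, which is exactly the claim.

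The one point that deserves care is the dependence on $t$: the inequality $n'(k)\le n(k)$ only kicks in once $k$ is large relative to $t$, because the appendix length scales with $t$. I would therefore state the conclusion as ``for all $k\ge k_0(\delta,t)$'' and then note that fixing $t$ (as the comparison does) makes this simply ``for $k$ large enough''. A minor related subtlety is whether one counts the appendix as $\log_2 p$ raw bits (as in the remark immediately preceding the corollary) or as $n(\log_2 p)$ bits when $C$ is genuinely applied to $c(m)$; in the latter case $\ell(k)=n(\log_2 p)$ is still $o(k)$ as long as $C$ has at most polynomial expansion, so the argument is unchanged, but I would fix one convention at the outset. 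Beyond this bookkeeping there is no real obstacle: the whole content is the elementary observation that a (poly)logarithmic additive overhead is negligible against a linear one.
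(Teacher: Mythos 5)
Your proof is correct and follows essentially the same route as the paper: decompose $n'(k)=k+n(k')$, use the hypothesis to obtain a linear gap $(\delta-1)k$, and note that the appendix length is only logarithmic in $k$ (via \Cref{lem:1} and the Prime Number Theorem), hence eventually absorbed. If anything you are more careful than the paper, whose proof cites only the \emph{lower} bound $n(k')\geq \delta k'$ when what is actually needed to conclude $(\delta-1)k\geq n(k')$ is an \emph{upper} bound on $n(k')$; your remark that one must assume $C$ has at most polynomial expansion makes explicit the hypothesis the paper leaves implicit.
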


\begin{proof}
	Let $k$ be the size of $m$ and $k'$ be the size of $c(m)$.
	
	\noindent
	We have $n'(k) = k + n(k')$, therefore
	\begin{align*}
	n(k) - n'(k) = n(k) - (k + n(k')) \geq (\delta-1) k - n(k').
	\end{align*}
	Now,
	\begin{align*} 
	(\delta-1) k - n(k') \geq 0 \Leftrightarrow (\delta-1) k \geq n(k'). 
	\end{align*}
	But $n(k') \geq \delta k'$,
	hence 
	\begin{align*}
	(\delta-1)k \geq \delta k' \Rightarrow k \geq \displaystyle \frac{k' \delta}{(\delta-1)}.
	\end{align*}
	Finally, from \Cref{lem:1}, $k' = O(\ln \ln k!)$, which guarantees that there exists a value of $k$ above which $n'(k) \leq n(k)$.
	\qed
\end{proof}

In other terms, any correcting code whose encoded message size is growing linearly with message size can benefit from the described construction.

\begin{figure}[ht]
\begin{center}
\begin{tikzpicture}[fdesc/.style={anchor=north east,sloped,font=\normalsize,pos=#1},fdesc/.default=1.02]
\begin{axis}[
		scale=1.2,
        axis x line=middle, 
        axis y line=middle, 
        xmin=0,xmax=10,xlabel=\normalsize{$k$},
        ymin=0,ymax=14, ylabel=\normalsize{$n(k)$},
        tick label style={color=white},
        ticks = none
        ]
    \addplot[domain=0:8.4, black, thick, dashed] {x}
          node[below,fdesc] {lower bound $n=k$};
    \addplot[domain=0:5.8, black, thick, dashed] {2*x}
          node[below,fdesc] {bound on underlying ECC $n=\delta k$};
          
    \addplot[domain=0:2.2*pi, red, thick] {1.5*x+sin(deg(x))))}
    	node[above, color=black] {\normalsize{~~~~~~~~new ECC}};
    \addplot[domain=0:18, white, very thick] {0.5*x^2+2.3};
    \addplot[domain=0:3.6, blue, thin, dashed] {4.95};
    
    \draw (0,0) -- (0,5.4) node[right] {\normalsize{$n'$}};
    \draw (0,0) -- (0,9.3) node[right] {\normalsize{$n$}};
    
    \addplot[domain=0:3.55, blue, ultra thin, dashed] {9};
  
    \draw [blue, thick] (0,2.3) parabola bend (0,2.3) (43,12.5)
    	node[above, color=black] {\normalsize{underlying ECC}};
        
    \addplot[blue,ultra thin] table{
    A B
    3.51 5
    3.51 9
    };
    
     \addplot[blue,ultra thin,->] table{
    A B
    1 7.6
    1 9
    };
    \addplot[blue,ultra thin,<-] table{
    A B
    1 5
    1 6.4
    }
    	node[above, color=black] {\normalsize{gain}};
\end{axis}
\end{tikzpicture}
\end{center}
\caption{Illustration of \Cref{corollary}. For large enough values of $k$, the new ECC uses smaller codewords as compared to the underlying ECC.}
\label{fig:cor}
\end{figure}

\paragraph{Expansion Rate:}
Let $k$ be the length of $m$ and consider the bit-size of the corresponding codeword as in \Cref{sizep}. The expansion rate $\rho$ is: 

\begin{equation}
\label{expansionrate11}
\rho = \displaystyle \frac{\size{m \| \mu(c(m))}}{\size{m}} = \displaystyle \frac{k + \size{\mu(c(m))}}{k} = 1 + \frac{\size{\mu(c(m))}}{k}
\end{equation}

\subsubsection{Reed-Muller Codes}
\label{sec:perfRM}

We illustrate the idea with Reed-Muller codes.
Reed-Muller (R-M) codes are a family of linear codes. Let $r \geq 0$ be an integer, and
$N = \log_2 n$, it can apply to messages of size
\begin{equation}
k=\sum_{i=1}^r \binom{N}{i}
\end{equation}

Such a code can correct up to $t=2^{N-r-1}-1$ errors. Some examples of $\{n,k,t\}$ triples
are given in \Cref{trm}. 
For instance, a message of size $163$
bits can be encoded as a $256$-bit string, among which up to $7$
errors can be corrected.

\begin{table}
	{\tt
		\begin{center}
			\begin{tabular}{|c||r|r|r|r|r|r|r|r|r|r|} \hline
				~~$n$ ~~& 16 & 64 & 128 & 256 & 512   & 2048 & 8192 & 32768 &
				131072 \\ \hline $k$ & 11 & 42 & 99 & 163  &  382  &  1024 & 5812
				& 9949 & 65536 \\ \hline $t$ & 1 & 3 & 3 & 7 & 7  &31 & 31 & 255 &
				255\\ \hline
			\end{tabular}
		\end{center}}
		\caption{Examples of length $n$, dimension $k$, and error capacity $t$ for Reed-Muller code.} \label{trm}
	\end{table}
	
	To illustrate the benefit of our approach, consider a 5812-bit message, which we wish to protect against up to 31 errors.
	
	A direct use of Reed-Muller would require $n(5812) = 8192$ bits as seen in \Cref{trm}. Contrast this with
	our code, which only has to protect $c(m)$, that is 931 bits as shown by \Cref{sizep},
	yielding a total size of $5812 + n(931) = 5812 + 2048 = 7860$ bits.
	
	Other parameters for the Reed-Muller primitive are illustrated in \Cref{tnc}.
	\begin{table}
		{\tt
			\begin{center}
				\begin{tabular}{|c||r|r|r|} \hline
					~~$n'$ ~~& ~638~  & ~7860~ &  ~98304~ \\ 
					\hline $k$ & ~382~ & ~5812~ & ~65536~  \\ 
					\hline $c(m)$ & ~157~ & ~931~  & ~9931~ \\ 
					\hline ~${\rm RM}(c(m))$~ & ~256~ & ~2048~ & ~32768~ \\ \hline $t$ & ~7~ & ~31~ & ~255~
					\\ \hline
				\end{tabular}
			\end{center}}
			\caption{$(n, k, t)$-codes generated from Reed-Muller by our construction.} \label{tnc}
		\end{table}
		
		\Cref{tnc} shows that for large message sizes and a small
		number of errors, our error-correcting code slightly outperforms Reed-Muller code. 

\subsection{The case $\gamma > 2$}

The difficulty in the case $\gamma > 2$ stems from the fact that a binary error in a $\gamma$-base message
will in essence scramble all digits preceding the error. As an example,
\begin{equation*}
\underline{12200210}12202012010011120202_3 + 2^{30} = \underline{12200210}22112000112220110110_3
\end{equation*}
Hence, unless $\gamma = 2^\Gamma$ for some $\Gamma$, a generalization makes sense only for channels
over which transmission uses $\gamma$ symbols. In such cases, we have the following: a $k$-bit message
$m$ is pre-encoded as a $\gamma$-base $\kappa$-symbol message $m'$. Here $\kappa = \lceil k/\log_2 \gamma \rceil$. 
\Cref{eq:equp} becomes:
\begin{equation*}
2 \cdot p_\kappa^{2t(\gamma - 1)} < p < 4 \cdot p_\kappa^{2t(\gamma - 1)}
\end{equation*}
Comparison with the binary case is complicated by the fact that here $t$ refers to the number of
\emph{any} errors regardless their semiologic meaning. In other words, an error transforming a $0$ into a $2$
counts exactly as an error transforming $0$ into a $1$.

\begin{example}
	\label{ex:gamma3}
As a typical example, for $t=7$, $\kappa = 10^6$ and $\gamma = 3$, $p_\kappa = 15485863$ and $p$ is a $690$-bit number.

For the sake of comparison, $t = 7$, $k = 1584963$ (corresponding to $\kappa = 10^6$) and $\gamma = 2$, yield $p_k = 25325609$ and a $346$-bit $p$.
\end{example}

\section{Improvement Using Smaller Primes}

The construction described in the previous section can be improved by choosing a smaller prime $p$, but comes at a price; namely decoding
becomes only heuristic.

\paragraph{Parameter Generation:} The idea consists in generating a prime $p$ smaller than
before. Namely, we generate a $p$ satisfying~:
\begin{equation}
\label{eqnewp}
2^u \cdot p_{k}^t < p < 2^{u+1} \cdot p_{k}^{ t}
\end{equation}
for some small integer $u \geq 1$. 

\paragraph{Encoding and Decoding:} Encoding remains as
previously. The redundancy $c(m)$ being approximately half as
small as the previous section's one, we have~:
\begin{equation}
\label{eqnab}
s=\frac{a}{b} \mod p, \quad 
\begin{cases}
a & =   \prod\limits_{\substack{(m'_i=1) \wedge (m_i=0)}} p_i \\
b & =  \prod\limits_{\substack{(m'_i=0) \wedge (m_i=1)}} p_i
\end{cases}
\end{equation}
and since there are at most $t$ errors, we must have~:
\begin{equation}
\label{eqab}
a \cdot b \leq (p_k)^t
\end{equation}
We define a finite sequence $\{A_i,B_i\}$ of integers such that
$A_i=2^{u \cdot i}$ and $B_i=\lfloor 2p/A_i \rfloor$. From
\Cref{eqnewp,eqab} there must be at least one index $i$ such that $0 \leq a \leq A_i$ and $0 <b \leq B_i$.
Then using \Cref{theo}, given $A_i$,
$B_i$, $p$ and $s$, the receiver can recover $a$ and $b$, and eventually
$m$. 

The problem with that approach is that we lost the guarantee that $\{a,b\}$ is unique. 
Namely we may find another $\{a',b'\}$ satisfying \Cref{eqnab} for some other
index $i'$. We expect this to happen with negligible probability
for large enough $u$, but this makes the modified code heuristic
(while perfectly implementable for all practical purposes).

\subsection{Performance}

\begin{lemma}
	The bit-size of $c(m)$ is:
	\begin{equation}
	\label{sizesp}
	\log_2 p \simeq u + t \log_2(k \ln k).
	\end{equation}
\end{lemma}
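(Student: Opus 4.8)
The plan is to mirror the argument of \Cref{lem:1} almost verbatim, since the only change is the explicit factor $2^u$ in the lower and upper bounds on $p$ given by \Cref{eqnewp}. First I would take base-2 logarithms of the two-sided inequality $2^u \cdot p_k^t < p < 2^{u+1} \cdot p_k^t$, which yields
\begin{equation*}
u + t \log_2 p_k < \log_2 p < u + 1 + t \log_2 p_k,
\end{equation*}
so that $\log_2 p = u + t \log_2 p_k + O(1)$, the additive constant being at most $1$ and hence absorbed by the $\simeq$.

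Next I would invoke the Prime Number Theorem in the form $p_k \sim k \ln k$ (the same footnoted estimate used in \Cref{lem:1}), giving $\log_2 p_k \simeq \log_2(k \ln k)$. Substituting this into the previous line produces $\log_2 p \simeq u + t \log_2(k \ln k)$, which is exactly \Cref{sizesp}. Since $c(m)$ is an element of $\mathbb{Z}/p\mathbb{Z}$ as defined in \Cref{defcm}, its bit-size is $\lceil \log_2 p \rceil$, matching the claimed expression up to the same rounding that the $\simeq$ already tolerates.

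There is essentially no obstacle here: the statement is a one-line corollary of the parameter-generation bound \Cref{eqnewp} together with the asymptotic $p_k \sim k \ln k$. The only point worth a word of care is the interpretation of $\simeq$: the quantities $u$ and the implicit $O(1)$ from the width-$1$ interval for $\log_2 p$, as well as the lower-order terms hidden in $p_k \sim k \ln k$, are all treated as negligible relative to the leading term $t\log_2(k\ln k)$, consistent with the conventions already adopted in \Cref{lem:1}. I would therefore keep the proof to two sentences, exactly parallel to the proof of \Cref{lem:1}, citing \Cref{eqnewp} and the Prime Number Theorem.

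\begin{proof}
From \Cref{eqnewp} we have $u + t\log_2 p_k < \log_2 p < u + 1 + t\log_2 p_k$, hence $\log_2 p \simeq u + t\log_2 p_k$. The Prime Number Theorem gives $p_k \simeq k\ln k$, so $\log_2 p \simeq u + t\log_2(k\ln k)$.
\qed
\end{proof}
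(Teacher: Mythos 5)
Your proof is correct and follows exactly the same route as the paper, which simply cites \Cref{eqnewp} and the Prime Number Theorem; you have merely spelled out the one-line computation (taking base-2 logarithms of the two-sided bound and substituting $p_k \simeq k\ln k$) that the paper leaves implicit.
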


\begin{proof}
	Using \Cref{eqnewp} and the Prime Number Theorem.
	\qed
\end{proof}

Thus, the smaller prime variant has a shorter $c(m)$.

As $u$ is a small integer (\emph{e.g.} $u=50$), it follows immediately from \Cref{eq:equp} that, for large $n$ and $t$, the size of the new prime $p$ will be approximately half the size of the prime $p$ generated in the preceding section. 

This brings down the minimum message size $k$ above which our construction provides an improvement over the bare
underlying correcting code.

\paragraph{Note:} In the case of Reed-Muller codes, this variant provides no improvement over the technique described in \Cref{sec:NewEcc} for the following reasons: (1) by design, Reed-Muller codewords
are powers of 2; and (2) \Cref{sizesp} cannot yield a twofold reduction in $p$. Therefore we cannot hope to reduce $p$ enough to get a smaller codeword.

That doesn't preclude other codes to show benefits, but the authors did not look for such codes.

\section{Prime Packing Encoding}

It is interesting to see whether the optimization technique of \cite{CMNS08a} yields more efficient ECCs.
Recall that in \cite{CMNS08a}, the $p_i$s are distributed amongst $\kappa$ packs. Information is encoded by picking
one $p_i$ per pack. This has an immediate impact on decoding: when an error occurs and a symbol $\sigma$ is replaced
by a symbol $\sigma'$, both the numerator and the denominator of $s$ are affected by \emph{additional} prime factors. 

Let $C = (\mu, \mu^{-1},\mathcal M, \mathcal C, \mathcal P)$ be a $t$-error capacity code, such that it is possible to efficiently recover $c$ from $\mu(c) \oplus e$ for any $c$ and any $e$, where $h(e) \leq t$.
Let $\gamma \geq 2$ be a positive integer.

Before we proceed, we define $\kappa := \lceil k / \log_2\gamma \rceil$ and
\begin{equation*}
f := f(\gamma, \kappa, t) = \prod_{i=k-t}^{k} p_{\gamma i}.
\end{equation*}

\paragraph{Parameter Generation:}
 Let $p$ be a prime number such that: 
\begin{equation}
\label{eqpcompress}
2 \cdot f^{2} < p < 4 \cdot f^{2}
\end{equation}
Let $\mathcal{\hat C} = \mathcal M \times \mathbb Z_p$ and
$\mathcal{\hat P} = (\mathcal P \cup \mathfrak P) \times \mathbb N$.
We now construct a variant of the ECC presented in \Cref{sec:NewEcc} from $C$ and denote it 
\begin{equation*}
\hat C = \left(\nu, \nu^{-1}, \mathcal M, \mathcal{\hat C}, \mathcal{\hat P} \right).
\end{equation*}

\paragraph{Encoding:}

We define the ``redundancy'' of a $k$-bit message $m \in \mathcal M$ (represented as $\kappa$ digits in base $\gamma$) by:
\begin{equation*}
\hat c(m) := \prod_{i=0}^{\kappa-1} p_{i\gamma + m_i + 1} \bmod p
\end{equation*}
A message $m$ is encoded as follows:
\begin{equation*}
\nu(m) :=  m \| \mu\left( \hat c\left(m\right)\right)
\end{equation*}

\paragraph{Decoding:}

The received information $\alpha$ differs from $\nu(m)$ by a certain number of bits. Again, we 
assume that the number of these differing bits is at most $t$. Therefore $\alpha = \nu(m) \oplus e$, where $h(e) \leq t$.
Write $e = e_m \| e_{\hat c} $ such that
\begin{equation*}
\alpha= \nu(m) \oplus e = m \oplus e_m \| \mu(\hat c(m)) \oplus e_{\hat c} = m' \|  \mu(\hat c(m)) \oplus e_{\hat c}.
\end{equation*}

Since $h(e) = h(e_m) + h(e_{\hat c}) \leq t$, the receiver can recover efficiently $\hat c(m)$ from $\alpha$. It is then possible to compute
\begin{align*}
s & := \frac{\hat c(m')}{\hat c(m)} \bmod p = \frac{\displaystyle\prod_{i=0}^{\kappa-1}p_{i\gamma+ m'_i + 1}}{\displaystyle\prod_{i=0}^{\kappa-1}p_{i\gamma+m_i + 1}} \bmod p.
\end{align*}
\begin{equation}
s = \frac{a}{b} \bmod{p}, \quad
\begin{cases}
a & = \displaystyle\prod \limits_{\substack{m'_i \neq m_i}} p_{i\gamma+ m'_i + 1} \\
b & = \displaystyle\prod \limits_{\substack{m_i \neq m'_i}} p_{i\gamma+ m_i + 1}
\end{cases}
\end{equation}

As $h(e) = h(e_m) + h(e_{\hat c}) \leq t$, we have that $a$ and $b$ are strictly smaller than $f(\gamma,\kappa)^{2t}$. As $A = B = f(\gamma,\kappa)^{2t}-1$, we observe from \Cref{eqpcompress} that $2AB < p$. 
We are now able to recover
$a, b$, $\operatorname{gcd}(a,b)=1$ such that $s = a/b \bmod p$ using lattice reduction \cite{vallee}.

Testing the divisibility of $a$ and $b$ by $p_1, \dotsc, p_{\kappa \gamma}$ the receiver can recover $e_m = m' \oplus m$, and from that get $m = m' \oplus e_m$. Note that by construction only one prime amongst $\gamma$ is used per 
``pack'': the receiver can therefore skip on average $\gamma/2$ primes in the divisibility testing phase.

\subsection{Performance}

Rosser's theorem \cite{dusart1999k,rosser} states that for $ n \geq 6$,
\begin{equation*}
\ln n + \ln\ln n - 1 < \frac{p_n}{n} < \ln n + \ln \ln n
\end{equation*}
\emph{i.e.} $p_n < n(\ln n + \ln \ln n )$. Hence a crude upper bound of $p$ is
	\begin{align*}
	p & < 4f(\kappa,\gamma, t)^2 \\
	& = 4 \left( \prod_{i=\kappa-t}^{\kappa} p_{\gamma i} \right)^2 \\
	& \leq 4 \prod_{i = \kappa-t}^\kappa \left( i\gamma (\ln{i\gamma} + \ln\ln(i\gamma)) \right)^2 \\
	& \leq 4 \gamma^{2t} \left( \frac{\kappa!}{(\kappa-t-1)!}\right)^2 \left( \ln \kappa\gamma + \ln\ln \kappa\gamma \right)^{2t}
	\end{align*}

Again, the total output length of the new error-correcting code is $n' = k + |p|$.

Plugging $\gamma = 3$, $\kappa = 10^6$ and $t = 7$ into \Cref{eqpcompress} we get a $410$-bit~$p$. This improves over Example~\ref{ex:gamma3} where
$p$ was $690$ bits long.

\bibliographystyle{splncs03}
\bibliography{Bibliography}

\begin{thebibliography}{10}
\providecommand{\url}[1]{\texttt{#1}}
\providecommand{\urlprefix}{URL }

\bibitem{turbo93}
Berrou, C., Glavieux, A., Thitimajshima, P.: {Near Shannon Limit
  Error-Correcting Coding and Decoding: Turbo-Codes}. In: IEEE International
  Conference on Communications - ICC'93. vol.~2, pp. 1064--1070 (May 1993)

\bibitem{CMNS08a}
Chevallier-Mames, B., Naccache, D., Stern, J.: {Linear Bandwidth Naccache-Stern
  Encryption}. In: Proceedings of the 6th International Conference on Security
  and Cryptography for Networks - SCN '08. pp. 327--339. Springer-Verlag (2008)

\bibitem{dusart1999k}
Dusart, P.: The $k^{{\rm th}}$ prime is greater than $k (\ln k+ \ln \ln k-1)$
  for $k \geq 2$. Mathematics of Computation pp. 411--415 (1999)

\bibitem{elias55}
Elias, P.: {Coding for Noisy Channels}. In: IRE Conv. Rec. pp. 37--46 (1955)

\bibitem{cryptorational}
Fouque, P.A., Stern, J., Wackers, J.G.: {CryptoComputing with Rationals}. In:
  Proceedings of the 6th International Conference - Financial Cryptography'02.
  Lecture Notes in Computer Science, vol. 2357, pp. 136--146. Springer (2002)

\bibitem{Goppa81}
Goppa, V.D.: {Codes on Algebraic Curves}. Soviet Math. Doklady  24,  170--172
  (1981)

\bibitem{Hamming50}
Hamming, R.W.: {Error Detecting and Error Correcting Codes}. Bell System
  Technical Journal  29(2),  147--160 (1950)

\bibitem{Muller54}
Muller, D.E.: {Application of Boolean Algebra to Switching Circuit Design and
  to Error Detection}. IRE Transactions on Information Theory (3),  6--12
  (1954)

\bibitem{NaccacheS97}
Naccache, D., Stern, J.: A new public-key cryptosystem. In: Advances in
  Cryptology - EUROCRYPT'97. Lecture Notes in Computer Science, vol. 1233, pp.
  27--36. Springer (1997)

\bibitem{Reed54}
Reed, I.: {A Class of Multiple-Error-Correcting Codes and the Decoding Scheme}.
  IRE Transactions on Information Theory (4),  38--49 (Sep 1954)

\bibitem{ReedS60}
Reed, I.S., Solomon, G.: Polynomial {C}odes {O}ver {C}ertain {F}inite {F}ields.
  Journal of the Society for Industrial and Applied Mathematics  8(2),
  300--304 (1960)

\bibitem{rosser}
Rosser, J.B.: {The $n$-th Prime is Greater than $n \ln n$}. In: Proceedings of
  the London Mathematical Society. vol.~45, pp. 21--44 (1938)

\bibitem{Shannon48}
Shannon, C.: {A Mathematical Theory of Communication}. Bell System Technical
  Journal  27,  379--423, 623--656 (1948)

\bibitem{vallee}
Vall{\'{e}}e, B.: {Gauss' Algorithm Revisited}. J. Algorithms  12(4),  556--572
  (1991)

\end{thebibliography}

\appendix

\section{Toy Example}
\label{sec:ToyEx}

\label{ex1}
\noindent
Let $m$ be the 10-bit message $1100100111.$
\noindent
For $t=2$, we let $p$ be the smallest prime number greater than $2 \cdot 29^4$, {\it i.e.} $p=707293$.
\noindent
We generate the redundancy:
$$c(m)=2^1 \cdot 3^1\cdot 5^0 \cdot 7^0 \cdot 11^1 \cdot 13^0 \cdot 17^0 \cdot 19^1 \cdot 23^1 \cdot 29^1 \bmod 707293 $$
$$\Rightarrow c(m)= 836418 \bmod 707293 = 129125.$$
As we focus on the new error-correcting code we simply omit the Reed-Muller component.
The encoded message is 
$$\nu(m)={\tt 1100100111_2} \| {\tt 129125_{10}}.$$
Let the received encoded message be 
$\alpha={\tt 1100101011_2} \| {\tt 129125_{10}}$.
\noindent
Thus,
$$c(m')=2^1 \cdot 3^1 \cdot 5^0 \cdot 7^0 \cdot 11^1 \cdot 13^0 \cdot 17^1 \cdot 19^0 \cdot 23^1 \cdot 29^1 \bmod p$$
$$\Rightarrow c(m')= 748374 \bmod 707293 = 41081.$$
Dividing by $c(m)$ we get
$$s = \frac{c(m')}{c(m)} = \frac{41081}{129125} \bmod 707293 = 632842$$
Applying the rationalize and factor technique we obtain $s = \displaystyle \frac{17}{19} \bmod 707293$.
\noindent
It follows that $m' \oplus m = {\tt 0000001100}$. Flipping the bits retrieved by this calculation, we recover $m$.

\end{document}